\newtheorem{theorem}{Theorem}
\theoremstyle{definition}
\newtheorem{definition}{Definition}
\theoremstyle{remark}
\newtheorem*{remark}{Remark}
\DeclareMathOperator{\poly}{poly}
\DeclareMathOperator{\tr}{tr}
\begin{document}

\title{Absence of temporal order in states with spatial correlation decay}

\begin{CJK*}{UTF8}{}

\CJKfamily{gbsn}

\author{Yichen Huang (黄溢辰)\thanks{yichuang@mit.edu}\\
Center for Theoretical Physics, Massachusetts Institute of Technology\\
Cambridge, Massachusetts 02139, USA\\
}

\maketitle

\end{CJK*}

\begin{abstract}

In quantum lattice systems, we prove that any stationary state with power-law (or even exponential) decay of spatial correlations has vanishing macroscopic temporal order in the thermodynamic limit. Assuming translational invariance, we obtain a similar bound on the temporal order between local operators at late times. Our proofs do not require any locality of the Hamiltonian. Applications in quantum time crystals are briefly discussed.

\end{abstract}

\section{Introduction and preliminaries}

We study temporal correlations, which are of particular interest in the context of quantum dynamics \cite{SS15, MSS16} and time crystals \cite{SZ17, EMNY19, KMS19}.

Without loss of generality, we work with a hypercubic lattice in $D=O(1)$ spatial dimensions such that each lattice site corresponds to a point in $\mathbb Z^D$. (It is easy to see that the same results hold for other lattices.) Suppose there is a spin at every lattice site. The system size, defined as the total number of sites or spins in the lattice, is denoted by $N=n^D$, where $n$ (assumed to be an integer) is the length of the lattice. We always consider the thermodynamic limit $n\to+\infty$.

The lattice induces a metric that allows us to define correlation decay. Let
\begin{equation}
\mathbf i=(i_1,i_2,\ldots,i_D)\in\mathbb Z^D,\quad\mathbf i'=(i'_1,i'_2,\ldots,i'_D)\in\mathbb Z^D
\end{equation}
be two sites. Their distance is given by
\begin{equation}
|\mathbf i-\mathbf i'|:=\sum_{l=1}^D|i_l-i'_l|\in\mathbb Z.
\end{equation}

Throughout this paper, asymptotic notations are used extensively. Let $f,g:\mathbb R^+\to\mathbb R^+$ be two positive functions. One writes $f(x)=O(g(x))$ if and only if there exist positive numbers $M,x_0$ such that $f(x)\le Mg(x)$ for all $x>x_0$; $f(x)=\Omega(g(x))$ if and only if there exist positive numbers $M,x_0$ such that $f(x)\ge Mg(x)$ for all $x>x_0$; $f(x)=\Theta(g(x))$ if and only if there exist positive numbers $M_1,M_2,x_0$ such that $M_1g(x)\le f(x)\le M_2g(x)$ for all $x>x_0$.

\begin{definition} [two-point correlation decay]
Suppose that a density matrix $\rho$ satisfies
\begin{equation}
|\tr(\rho L_{\mathbf i}L_{\mathbf i'})-\tr(\rho L_{\mathbf i})\tr(\rho L'_{\mathbf i'})|\le\|L_{\mathbf i}\|\|L_{\mathbf i'}\|f(|\mathbf i-\mathbf i'|)
\end{equation}
for any local operators $L_{\mathbf i},L'_{\mathbf i'}$ at arbitrary sites $\mathbf i,\mathbf i'$. The state $\rho$ has power-law or exponential decay of spatial correlations if $f(r)=1/\poly r$ or $f(r)=e^{-\Omega(r)}$, respectively, in the limit $r\to+\infty$.
\end{definition}

Let $A_{\mathbf i},B_{\mathbf i}$ with $\|A_{\mathbf i}\|,\|B_{\mathbf i}\|=O(1)$ be (not necessarily Hermitian) local operators supported in a small neighborhood of the site $\mathbf i$. Define the ``macroscopic'' operators
\begin{equation}
A=\frac{1}{N}\sum_{\mathbf i}A_{\mathbf i},\quad B=\frac{1}{N}\sum_{\mathbf i}B_{\mathbf i}.
\end{equation}
Note that $A,B$ do not have to be translationally invariant.

In this paper, we study the (connected) temporal correlators between the macroscopic operators $A$ and $B$ and between the local operators $A_{\mathbf i}$ and $B_{\mathbf i'}$ for stationary states with power-law (or even exponential) decay of spatial correlations. (A density matrix $\rho$ is stationary if it does not evolve under the Hamiltonian, i.e., $[\rho,H]=0$.) In the thermodynamic limit, we prove that the former vanishes at any time. Furthermore, the latter vanishes at late times assuming translational invariance and the non-degeneracy of the spectrum of $H$. Our proofs do not use the Lieb-Robinson bound \cite{LR72, NS06, HK06}, nor any other notion of the locality of $H$. Therefore, our results apply to Hamiltonians with arbitrary long-range interactions. 

Many physical states have spatial correlation decay. For example, ground states of gapped local Hamiltonians have a finite correlation length \cite{NS06, HK06}, while critical states usually have power-law correlation decay. The thermal state of a local Hamiltonian is expected to have a finite correlation length above the critical temperature. One can prove that one-dimensional quantum systems always have a finite correlation length at any temperature \cite{Ara69}, while in two and higher dimensions we have exponential decay of correlations at sufficiently high temperature \cite{KGK+14}.

\section{Macroscopic operators}

We are ready to present the main results of this paper.

\begin{theorem} \label{main}
Let $\rho$ be a stationary state with correlation decay $f(r)=O(r^{-\alpha})$. At any time $t\in\mathbb R$,
\begin{equation}
|\tr(\rho A(t)B)-\tr(\rho A)\tr(\rho B)|=
\begin{cases}
O(1/N) & \alpha>D\\
O(\log N)/N & \alpha=D\\
O(n^{-\alpha}) & \alpha<D
\end{cases},
\end{equation}
where $A(t):=e^{iHt}Ae^{-iHt}$ is the time evolution of $A$ in the Heisenberg picture, and $N=n^D$ is the system size. Note that $A,B$ do not have to be translationally invariant, and $H$ does not have to be a local Hamiltonian.
\end{theorem}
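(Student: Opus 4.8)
The plan is to reduce the temporal cross-correlator to a pair of \emph{equal-time} quantities to which the spatial decay hypothesis applies directly. First I would pass to the fluctuation operators $\tilde A := A - \tr(\rho A)I$ and $\tilde B := B - \tr(\rho B)I$. Since the identity commutes with $H$ and $\tr(\rho A(t)) = \tr(\rho A)$ by stationarity, a short expansion gives the identity
\[
\tr(\rho A(t)B) - \tr(\rho A)\tr(\rho B) = \tr(\rho\,\tilde A(t)\,\tilde B),
\]
where $\tilde A(t) = e^{iHt}\tilde A e^{-iHt}$. The obstruction is that $\tilde A(t)$ is highly nonlocal: time evolution spreads it over the whole lattice, so the correlation-decay hypothesis, which speaks only of \emph{local} operators, cannot be applied to the cross term $\tr(\rho\,\tilde A(t)\,\tilde B)$ as it stands.

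The key step circumvents this, and is where I expect the real content to lie. I would equip operators with the positive semidefinite inner product $\langle X, Y\rangle := \tr(\rho X^\dagger Y)$ (nonnegativity follows from $\rho\succeq 0$) and apply Cauchy--Schwarz with $X = \tilde A(t)^\dagger$ and $Y = \tilde B$:
\[
|\tr(\rho\,\tilde A(t)\,\tilde B)| \le \sqrt{\tr(\rho\,\tilde A(t)\tilde A(t)^\dagger)}\,\sqrt{\tr(\rho\,\tilde B^\dagger\tilde B)}.
\]
Now stationarity enters decisively: writing $\tilde A(t)\tilde A(t)^\dagger = e^{iHt}\tilde A\tilde A^\dagger e^{-iHt}$ and using $[\rho,H]=0$ together with cyclicity of the trace gives $\tr(\rho\,\tilde A(t)\tilde A(t)^\dagger) = \tr(\rho\,\tilde A\tilde A^\dagger)$, eliminating all time dependence. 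Thus, uniformly in $t$, the whole estimate collapses onto the two equal-time variances $\tr(\rho\,\tilde A\tilde A^\dagger)$ and $\tr(\rho\,\tilde B^\dagger\tilde B)$, in which no nonlocal object survives.

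It then remains to bound these variances from spatial decay. Expanding $\tilde A = N^{-1}\sum_{\mathbf i}\tilde A_{\mathbf i}$, each pair term equals $\tr(\rho A_{\mathbf i}A_{\mathbf i'}^\dagger) - \tr(\rho A_{\mathbf i})\tr(\rho A_{\mathbf i'}^\dagger)$, which the decay hypothesis controls by $\|A_{\mathbf i}\|\|A_{\mathbf i'}^\dagger\| f(|\mathbf i-\mathbf i'|) = O(|\mathbf i-\mathbf i'|^{-\alpha})$, while the diagonal and the $O(1)$-many short-range terms each contribute only $O(1)$. Hence $\tr(\rho\,\tilde A\tilde A^\dagger) \le N^{-2}\sum_{\mathbf i,\mathbf i'} O(f(|\mathbf i-\mathbf i'|))$, and likewise for $B$. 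The last step is the elementary lattice sum: for fixed $\mathbf i$ there are $\Theta(r^{D-1})$ sites at distance $r$ up to $O(n)$, so $\sum_{\mathbf i'}|\mathbf i-\mathbf i'|^{-\alpha} = \sum_r O(r^{D-1-\alpha})$ evaluates to $O(1)$, $O(\log n)$, or $O(n^{D-\alpha})$ according as $\alpha>D$, $\alpha=D$, or $\alpha<D$. Dividing by the remaining factor $N$ yields the variance bounds $O(1/N)$, $O(\log N/N)$, and $O(n^{-\alpha})$; since both variances satisfy the same bound $G$, the product of square roots returns $G$, matching the three cases of the theorem.

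In short, the main obstacle is conceptual rather than computational: once one realizes that Cauchy--Schwarz combined with stationarity trades the intractable nonlocal operator $\tilde A(t)$ for purely equal-time autocorrelators, only a routine (if slightly fiddly) geometric lattice sum is left, and crucially no locality of $H$ is ever invoked.
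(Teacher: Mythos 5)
Your proof is correct and takes essentially the same route as the paper: Cauchy--Schwarz reduces the temporal correlator to the equal-time variances $\tr(\rho\tilde A\tilde A^\dag)$ and $\tr(\rho\tilde B^\dag\tilde B)$, which are then bounded via the correlation-decay hypothesis and the same lattice-sum estimate yielding the three cases. The only difference is presentational: the paper expands in a common eigenbasis of $\rho$ and $H$ and removes the time dependence by the triangle inequality on the phases $e^{i(E_j-E_k)t}$, whereas you stay basis-free and remove it via cyclicity of the trace together with $[\rho,H]=0$; both land on the identical bound $\sqrt{\tr(\rho AA^\dag)\tr(\rho B^\dag B)}$ for centered operators.
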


\begin{proof}
We may assume $\tr(\rho A_{\mathbf i})=\tr(\rho B_{\mathbf i})=0$ for any $\mathbf i$. This is without loss of generality because one can simply use the transform
\begin{equation}
A_{\mathbf i}\rightarrow A_{\mathbf i}-\tr(\rho A_{\mathbf i})I,\quad B_{\mathbf i}\rightarrow B_{\mathbf i}-\tr(\rho B_{\mathbf i})I
\end{equation}
if necessary. Let $\{|1\rangle,|2\rangle,\ldots\}$ be a complete set of eigenstates of $H$ with corresponding energies $E_1\le E_2\le\cdots$, and $X_{jk}=\langle j|X|k\rangle$ be the matrix element of an operator in the energy eigenbasis. Writing out the matrix elements,
\begin{multline}
|\tr(\rho A(t)B)|=\left|\sum_{j,k}\rho_{jj}A_{jk}B_{kj}e^{i(E_j-E_k)t}\right|\le\sum_{j,k}\rho_{jj}|A_{jk}||B_{kj}|\le\sqrt{\sum_{j,k}\rho_{jj}|A_{jk}|^2\times\sum_{j,k}\rho_{jj}|B_{kj}|^2}\\
=\sqrt{\sum_j\rho_{jj}(AA^\dag)_{jj}\times\sum_j\rho_{jj}(B^\dag B)_{jj}}=\sqrt{\tr(\rho AA^\dag)\tr(\rho B^\dag B)}.
\end{multline}
Consider the first factor:
\begin{equation}
\tr(\rho AA^\dag)\le\frac{1}{N^2}\sum_{\mathbf i,\mathbf i'}|\tr(\rho A_{\mathbf i}A_{\mathbf i'}^\dag)|=\frac{O(1)}{N^2}\sum_{\mathbf i,\mathbf i'}f(|\mathbf i-\mathbf i'|)\le\frac{O(1)}{N}\sum_{|\mathbf i|=O(n)}f(|\mathbf i|).
\end{equation}
The series can be estimated from an integral
\begin{equation}
\frac{O(1)}{N}\sum_{|\mathbf i|=O(n)}|\mathbf i|^{-\alpha}=\frac{O(1)}{N}\int_{r=1}^{O(n)}r^{-\alpha}\times r^{D-1}\,\mathrm dr=\frac{O(1)}{N}\times
\begin{cases}
O(1) & \alpha>D\\
O(\log n) & \alpha=D\\
O(n^{D-\alpha}) & \alpha<D
\end{cases}.
\end{equation}
We complete the proof by noting that $\tr(\rho B^\dag B)$ can be upper bounded similarly.
\end{proof}

\begin{remark}
The bound $O(1/N)$ for $\alpha>D$ in Theorem \ref{main} is tight for any $D=O(1)$. For example, let $H=\sum_{\mathbf i}H_{\mathbf i}$ be a translationally invariant local Hamiltonian, where each term is traceless $\tr H_{\mathbf i}=0$ and has operator norm $\|H_{\mathbf i}\|=\Theta(1)$. Let $A=B=H/N$ and $\rho=I/\tr I$ be the infinite temperature state. Then,
\begin{equation}
\tr(\rho A(t)B)=\tr(H^2)/(N^2\tr I)=\Theta(1/N).
\end{equation}
The last step is a well-known fact, and can be proved by expanding $H$ in the Pauli operator basis and then counting the number of terms that do not vanish upon taking the trace in the expansion of $H^2$; see, e.g., Ref. \cite{HBZ19}.
\end{remark}

\section{Local operators}

We now consider the temporal correlation of local operators at large time $t$.

\begin{theorem} \label{local}
Suppose that the Hamiltonian $H$ is translationally invariant and its spectrum is simple (i.e., non-degenerate). Let $\rho$ be a stationary state with correlation decay $f(r)=O(r^{-\alpha})$. Then,
\begin{equation}
\left|\lim_{\tau\to+\infty}\frac{1}{\tau}\int_0^\tau\tr(\rho A_{\mathbf i}(t)B_{\mathbf i'})\,\mathrm dt-\tr(\rho A_{\mathbf i})\tr(\rho B_{\mathbf i'})\right|=
\begin{cases}
O(1/N) & \alpha>D\\
O(\log N)/N & \alpha=D\\
O(n^{-\alpha}) & \alpha<D
\end{cases}
\end{equation}
for any local operators $A_{\mathbf i},B_{\mathbf i'}$ at arbitrary sites $\mathbf i,\mathbf i'$. Note that $H$ does not have to be a local Hamiltonian.
\end{theorem}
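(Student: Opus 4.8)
The plan is to reduce Theorem~\ref{local} to Theorem~\ref{main}, by showing that the long-time average of the \emph{local} temporal correlator coincides with that of a suitable \emph{macroscopic} one, which Theorem~\ref{main} already controls uniformly in $t$. The two ingredients that make this possible are stationarity together with simplicity of the spectrum (which collapses the time average onto the diagonal), and translational invariance (which erases the dependence on the sites $\mathbf i,\mathbf i'$).

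First I would use stationarity and non-degeneracy to evaluate the time average. Since $[\rho,H]=0$ and the spectrum is simple, $\rho$ is diagonal in the energy eigenbasis, $\rho=\sum_j\rho_{jj}|j\rangle\langle j|$. Expanding $\tr(\rho A_{\mathbf i}(t)B_{\mathbf i'})=\sum_{j,k}\rho_{jj}(A_{\mathbf i})_{jk}(B_{\mathbf i'})_{kj}e^{i(E_j-E_k)t}$ and performing the Ces\`aro average, every phase with $E_j\neq E_k$ integrates to zero, and non-degeneracy forces $E_j=E_k\Rightarrow j=k$, so
\[
\lim_{\tau\to+\infty}\frac{1}{\tau}\int_0^\tau\tr(\rho A_{\mathbf i}(t)B_{\mathbf i'})\,\mathrm dt=\sum_j\rho_{jj}(A_{\mathbf i})_{jj}(B_{\mathbf i'})_{jj}.
\]
Next I would invoke translational invariance to strip off the site labels. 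Writing $T^{\mathbf a}$ for the lattice translation by $\mathbf a$, the relation $[H,T^{\mathbf a}]=0$ together with simplicity of each level makes every eigenstate a common translation eigenstate, $T^{\mathbf a}|j\rangle=e^{i\theta_j(\mathbf a)}|j\rangle$. Hence for the translate $A_{\mathbf i+\mathbf a}=T^{\mathbf a}A_{\mathbf i}(T^{\mathbf a})^\dagger$ the phases cancel in the diagonal element, giving the site-independence $(A_{\mathbf i+\mathbf a})_{jj}=(A_{\mathbf i})_{jj}$.

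Armed with this, I would introduce the translation-averaged macroscopic operators $A=\frac{1}{N}\sum_{\mathbf a}A_{\mathbf i+\mathbf a}$ and $B=\frac{1}{N}\sum_{\mathbf a}B_{\mathbf i'+\mathbf a}$, which are legitimate inputs to Theorem~\ref{main} and satisfy $A_{jj}=(A_{\mathbf i})_{jj}$, $B_{jj}=(B_{\mathbf i'})_{jj}$, as well as $\tr(\rho A_{\mathbf i})=\tr(\rho A)$ and $\tr(\rho B_{\mathbf i'})=\tr(\rho B)$. Combining these identities with the display above shows that the left-hand side of the theorem equals
\[
\left|\sum_j\rho_{jj}A_{jj}B_{jj}-\tr(\rho A)\tr(\rho B)\right|=\left|\lim_{\tau\to+\infty}\frac{1}{\tau}\int_0^\tau\bigl[\tr(\rho A(t)B)-\tr(\rho A)\tr(\rho B)\bigr]\,\mathrm dt\right|,
\]
i.e.\ exactly the long-time average of the macroscopic connected correlator. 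Since Theorem~\ref{main} bounds the integrand $|\tr(\rho A(t)B)-\tr(\rho A)\tr(\rho B)|$ by a quantity independent of $t$, I may pull the absolute value inside the average and conclude that the identical three-case estimate survives the limit $\tau\to+\infty$, which is the claim.

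The main obstacle is the site-independence $(A_{\mathbf i+\mathbf a})_{jj}=(A_{\mathbf i})_{jj}$ in the middle step: this is the sole place where both hypotheses of the theorem are genuinely used, since without non-degeneracy the eigenstates cannot be taken to be momentum eigenstates, the cancellation of the translation phases fails, and the reduction to the macroscopic correlator breaks down. Everything else is bookkeeping, with the uniformity of the Theorem~\ref{main} bound in $t$ being what allows the time average to inherit the estimate for free.
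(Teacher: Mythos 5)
Your proposal is correct and follows essentially the same route as the paper: both construct the translation-averaged macroscopic operators $A,B$, use the simple spectrum together with translation invariance to get the site-independence $(A_{\mathbf i})_{jj}=A_{jj}$, and collapse the long-time average onto the diagonal sum $\sum_j\rho_{jj}A_{jj}B_{jj}$. The only difference is in the finish: the paper restores the off-diagonal terms via $\left|\sum_j\rho_{jj}A_{jj}B_{jj}\right|\le\sum_{j,k}\rho_{jj}|A_{jk}||B_{kj}|$ and then repeats the Cauchy--Schwarz/correlation-decay steps from the proof of Theorem \ref{main}, whereas you invoke Theorem \ref{main} as a black box through the identity of the diagonal sum with the time-averaged macroscopic correlator and the time-uniformity of its bound --- the two finishes are mathematically equivalent.
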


\begin{proof}
The translational invariance of $H$ implies that all eigenstates are translationally invariant. Since the statement of the theorem does not involve $A,B$, we can simply define them to be translationally invariant! This is achieved by taking the sum of the lattice-translated copies of $A_{\mathbf i}$ and $B_{\mathbf i'}$, respectively, and dividing by $N$ as before. Thus, $(A_{\mathbf i})_{jj}=A_{jj}$ and $(B_{\mathbf i'})_{jj}=B_{jj}$. We still use the convention $\tr(\rho A_{\mathbf i})=\tr(\rho B_{\mathbf i'})=0$. Writing out the matrix elements,
\begin{multline}
\left|\lim_{\tau\to+\infty}\frac{1}{\tau}\int_0^\tau\tr(\rho A_{\mathbf i}(t)B_{\mathbf i'})\,\mathrm dt\right|=\left|\lim_{\tau\to+\infty}\frac{1}{\tau}\int_0^\tau\sum_{j,k}\rho_{jj}(A_{\mathbf i})_{jk}(B_{\mathbf i'})_{kj}e^{i(E_j-E_k)t}\,\mathrm dt\right|\\
=\left|\sum_{j,k}\rho_{jj}(A_{\mathbf i})_{jk}(B_{\mathbf i'})_{kj}\delta_{E_j,E_k}\right|=\left|\sum_j\rho_{jj}(A_{\mathbf i})_{jj}(B_{\mathbf i'})_{jj}\right|=\left|\sum_j\rho_{jj}A_{jj}B_{jj}\right|\le\sum_{j,k}\rho_{jj}|A_{jk}||B_{kj}|.
\end{multline}
where $\delta$ is the Kronecker delta, and we used the assumption of a simple spectrum. The remaining steps follow those in the proof of Theorem \ref{main}.
\end{proof}

\begin{remark}
The bound $O(1/N)$ for $\alpha>D$ in Theorem \ref{local} is also tight for any $D=O(1)$. This follows from a very similar argument as that in the previous remark on the tightness of Theorem \ref{main}. 
\end{remark}

We compare Theorem \ref{local} with a recent result in the literature.

\begin{theorem} [\cite{ARG19}] \label{thm:ARG}
Suppose that $H$ is a translationally invariant local Hamiltonian and its spectrum is simple. Let $\rho$ be a stationary state with correlation decay $f(r)=O(e^{-r/\xi})$. Then,
\begin{equation} \label{eq:ARG}
\left|\lim_{\tau\to+\infty}\frac{1}{\tau}\int_0^\tau\tr(\rho A_{\mathbf i}(t)B_{\mathbf i'})\,\mathrm dt-\tr(\rho A_{\mathbf i})\tr(\rho B_{\mathbf i'})\right|=O\left(\xi^{\frac{D}{D+1}}N^{-\frac{1}{D+1}}\log N\right).
\end{equation}
for any local operators $A_{\mathbf i},B_{\mathbf i'}$ at arbitrary sites $\mathbf i,\mathbf i'$. 
\end{theorem}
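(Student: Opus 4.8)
The plan is to combine the simple-spectrum reduction already used for Theorem~\ref{local} with the locality of $H$, which here is genuinely exploited through a Lieb--Robinson bound rather than discarded. First I would repeat the opening step of the proof of Theorem~\ref{local}: because the spectrum is simple, the long-time average kills every off-diagonal term, so the object to be bounded is exactly
\begin{equation}
S=\sum_j\rho_{jj}a_jb_j,\qquad a_j=\langle j|A_{\mathbf i}|j\rangle,\quad b_j=\langle j|B_{\mathbf i'}|j\rangle,
\end{equation}
with $\sum_j\rho_{jj}a_j=\sum_j\rho_{jj}b_j=0$ after the mean-subtraction transform. Translational invariance of the eigenstates again makes $a_j,b_j$ independent of the sites $\mathbf i,\mathbf i'$, so $S$ can be analyzed through any spatial average of $A_{\mathbf i}$ and $B_{\mathbf i'}$, and a Cauchy--Schwarz step reduces the task to bounding $\sum_j\rho_{jj}|a_j|^2$ and the analogous sum for $b_j$.

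The essential difference from Theorem~\ref{local} is that I would \emph{not} average over the entire lattice at once. Instead I would coarse-grain over a box $R$ of linear size $\ell$ (volume $\ell^D$) and control $S$ inside a light cone tied to $\ell$ by the Lieb--Robinson velocity $v$: a local operator reaches spatial separation $\ell$ only after time $\sim\ell/v$, up to a tail that is negligible once the cutoff exceeds $\xi_{\mathrm{LR}}\log N$. Rewriting $\sum_j\rho_{jj}|a_j|^2$ as the long-time average of the autocorrelation $\tr(\rho A_{\mathbf i}(t)\bar A_R^\dag)$ (using that the diagonal of $\bar A_R$ equals $a_j$) and splitting spacetime into the light-cone region and its complement, exponential clustering of $\rho$ controls the complement by a correlation-volume factor that \emph{decreases} in $\ell$ (schematically $\sim\xi/\ell$ after summing the clustering tails along the cone), while the overlap region contributes a finite-size penalty that \emph{increases} in $\ell$ relative to the total volume (schematically $\sim\ell^D/N$). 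This yields $|S|\lesssim \xi/\ell+\ell^D/N$ up to the Lieb--Robinson logarithm.

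Balancing the two terms fixes the coarse-graining scale at $\ell\sim(\xi N)^{1/(D+1)}$ — the exponent $D+1$ being the dimension of the spacetime slab carved out by the light cone — and produces the claimed $O\!\left(\xi^{D/(D+1)}N^{-1/(D+1)}\log N\right)$, the $\log N$ descending from the Lieb--Robinson cutoff needed to push the tails below the $1/N$ scale. I expect the main obstacle to be precisely the overlap region: once the light cone engulfs the probe support, the time-evolved operator and $\bar A_R$ share support and clustering no longer applies, so that contribution must be handled by a finite-size/boundary estimate and shown to be only polynomially large in $\ell$ rather than $O(1)$. Making this uniform in $\mathbf i,\mathbf i'$ and in the arbitrary stationary weights $\rho_{jj}$, while keeping the Lieb--Robinson tail errors subleading, is where the quantitative work concentrates; it is also exactly the step that the present paper's argument avoids, which is why discarding locality there removes the $\ell^D/N$ penalty and sharpens the bound to $O(1/N)$.
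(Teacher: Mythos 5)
Your opening reductions are correct and match the structure used both in this paper and in Ref.~\cite{ARG19}: with a simple spectrum the infinite-time average dephases to the diagonal ensemble, translational invariance of the eigenstates gives $(A_{\mathbf i})_{jj}=(\bar A_R)_{jj}$ for a spatial average over \emph{any} region $R$, and Cauchy--Schwarz reduces everything to $\sum_j\rho_{jj}|a_j|^2$. The gap is in the mechanism you then invoke: a Lieb--Robinson light cone cannot control an \emph{infinite}-time average. For any fixed coarse-graining scale $\ell$, the times with $vt\lesssim\ell$ occupy a vanishing fraction of $[0,\tau]$ as $\tau\to+\infty$, so the ``complement of the light cone'' carries zero weight in the average and your schematic $\xi/\ell$ term never materializes; the average is dominated entirely by late times, where Lieb--Robinson gives nothing. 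Put differently, once you have dephased to the diagonal ensemble the problem is purely static, and re-expressing $\sum_j\rho_{jj}|a_j|^2$ as a time-averaged autocorrelator (your identity is correct) merely undoes that reduction. Symmetrically, your $\ell^D/N$ term has no source: after normalizing $\bar A_R$ by $\ell^D$, nothing in your decomposition refers to $N$ at all, so no balancing can produce an $N$-dependent bound. What clustering alone gives, with no dynamics, is $\sum_j\rho_{jj}|a_j|^2\le\tr(\rho\bar A_R\bar A_R^\dag)\le\ell^{-2D}\sum_{\mathbf i',\mathbf i''\in R}|\tr(\rho A_{\mathbf i'}A_{\mathbf i''}^\dag)|=O(\xi^D/\ell^D)$, which improves monotonically in $\ell$; the optimal choice is therefore not $\ell\sim(\xi N)^{1/(D+1)}$ but $\ell=n$, i.e., the full lattice---which is exactly the proof of Theorem~\ref{local} and yields $O(\xi^D/N)$.

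For context: this paper does not prove the quoted theorem (it is cited from Ref.~\cite{ARG19}), and the $1/(D+1)$ exponent there is not of spacetime origin. As the appendix of this paper reveals, the proof in Ref.~\cite{ARG19} runs through their Lemma 1, a purely spatial concentration estimate of the form $\sum_{k:\,|\Delta_{k,A}|\ge\Delta}\rho_{kk}\le\exp\bigl(-c\Delta N^{1/(D+1)}\xi^{-D/(D+1)}\bigr)$---note it contains no velocity or time scale---followed by a typical/atypical split with $\Delta=\Theta\bigl(\xi^{D/(D+1)}N^{-1/(D+1)}\log N\bigr)$; the $\log N$ makes the atypical weight polynomially small and is not a Lieb--Robinson cutoff, and the $D+1$ arises from balancing coarse-graining block size against clustering tails inside that lemma. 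Finally, if the goal is only to establish the stated bound rather than to reproduce the argument of Ref.~\cite{ARG19}, the shortest correct route is the one you mention only in passing: $f(r)=O(e^{-r/\xi})$ implies $f(r)=O(r^{-\alpha})$ for every fixed $\alpha>D$, so Theorem~\ref{local} applies verbatim and gives $O(1/N)$, which is stronger than the right-hand side of Eq.~(\ref{eq:ARG}); no locality of $H$ and no Lieb--Robinson bound is needed anywhere.
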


\begin{remark}
We observe that by revising the proof in Ref. \cite{ARG19}, the right-hand side of Eq. (\ref{eq:ARG}) can be improved to
\begin{equation} \label{eq:H}
O\left(\xi^{\frac{2D}{D+1}}N^{-\frac{2}{D+1}}\log^2N\right),
\end{equation}
which remains to be weaker than the bound $O(1/N)$ for $\alpha>D$ in Theorem \ref{local}. We show how to obtain (\ref{eq:H}) in the appendix.
\end{remark}

While both Theorems \ref{local}, \ref{thm:ARG} establish that
\begin{equation}
\lim_{\tau\to+\infty}\frac{1}{\tau}\int_0^\tau\tr(\rho A_{\mathbf i}(t)B_{\mathbf i'})\,\mathrm dt\to\tr(\rho A_{\mathbf i})\tr(\rho B_{\mathbf i'})
\end{equation}
in the thermodynamic limit $n\to+\infty$, it should be clear that Theorem \ref{local} is technically stronger.

\section{Remarks on quantum time crystal}

In a remarkable paper, Watanabe and Oshikawa \cite{WO15} related the temporal correlator between macroscopic operators to the concept of quantum time crystals. In this sense, Theorem \ref{main} can be viewed as a proof of the absence of quantum time crystals for states with spatial correlation decay. This proof complements other proofs \cite{WO15, KMS19} in the literature. At least, our proof does not use the Lieb-Robinson bound and thus applies to time-independent Hamiltonians with arbitrary long-range interactions.

The only severe (and perhaps unfavorable) assumption of our approach is the decay of spatial correlations. However, this assumption is in some sense necessary because it is possible to construct states with long-range correlations in systems with long-range interactions such that a quantum time crystal is observed \cite{KK19}.

Our bounds in Theorem \ref{main} on the temporal correlation between macroscopic operators have the desirable property that it is time-independent. In contrast, the previous bounds \cite{WO15,KMS19} grow with time and cannot rule out the following possibilities:
\begin{itemize}
\item The temporal correlator oscillates with a period that grows with the system size.
\item The temporal correlator starts to oscillate (with a constant period) only after a transition time that grows with the system size. 
\end{itemize}

\emph{Note added.}---Very recently, we became aware of a conceptually related but technically completely different paper \cite{WOK19}, which also proved upper bounds on the temporal correlations between macroscopic operators. The settings in this reference are not the same as ours and hence the results there are not directly comparable to the ones in the present paper except at infinite temperature. In this case, Theorem \ref{main} is technically stronger than Eq. (38) in Ref. \cite{WOK19}, which gives a time-dependent upper bound $O(1+t^{D+1})/N$.

\section*{Acknowledgments}

This work was supported by NSF PHY-1818914.

\appendix

\section{Proof of (\ref{eq:H})}

We show how to obtain (\ref{eq:H}) by revising the proof in Ref. \cite{ARG19}. This appendix is not self-contained: We assume that the reader is already familiar with the notations in this reference. We follow the proof in Ref. \cite{ARG19} up to Eq. (A15), and then do the following.

Define $A_{kk}-\tr(\rho A)=\Delta_{k,A}$ and $B_{kk}-\tr(\rho B)=\Delta_{k,B}$ so that
\begin{multline}
\sum_kA_{kk}B_{kk}\rho_{kk}=\tr(\rho A)\tr(\rho B)+\sum_k\rho_{kk}(\tr(\rho A)\Delta_{k,B}+\tr(\rho B)\Delta_{k,A}+\Delta_{k,A}\Delta_{k,B})\\
=\tr(\rho A)\tr(\rho B)+\sum_k\rho_{kk}\Delta_{k,A}\Delta_{k,B}
\end{multline}
Let $\Delta:=KN^{-\frac{1}{D+1}}\log N$, and split the error term as
\begin{equation} \label{A18}
\left|\sum_k\rho_{kk}\Delta_{k,A}\Delta_{k,B}\right|\le\left|\sum_{k\in S}\rho_{kk}\Delta_{k,A}\Delta_{k,B}\right|+\left|\sum_{k\not\in S}\rho_{kk}\Delta_{k,A}\Delta_{k,B}\right|,
\end{equation}
where $S:=\{k:|\Delta_{k,A}|,|\Delta_{k,B}|\le\Delta\}$. Note that the absolute value of first term is smaller than $\Delta^2$ by definition, and the second term can be bounded as
\begin{multline}
\left|\sum_{k\not\in S}\rho_{kk}\Delta_{k,A}\Delta_{k,B}\right|\le\max_{k'}|\Delta_{k',A}\Delta_{k',B}|\sum_{k\not\in S}\rho_{kk}\le\max_{k'}|\Delta_{k',A}\Delta_{k',B}|\left(\sum_{|\Delta_{k,A}|\ge\Delta}\rho_{kk}+\sum_{|\Delta_{k,B}|\ge\Delta}\rho_{kk}\right)\\
\le2\max_{k'}|\Delta_{k',A}\Delta_{k',B}|e^{-c\Delta N^{\frac{1}{D+1}}\xi^{-\frac{D}{D+1}}}\le O(1)\|A\|\|B\|N^{-cK\xi^{-\frac{D}{D+1}}},
\end{multline}
where we used Lemma 1 in Ref. \cite{ARG19}. We choose $K$ such that $cK\xi^{-\frac{D}{D+1}}=2/(D+1)$. Then, the first term dominates on the right-hand side of (\ref{A18}). Hence,
\begin{equation}
\left|\sum_kA_{kk}B_{kk}\rho_{kk}-\tr(\rho A)\tr(\rho B)\right|=\left|\sum_k\rho_{kk}\Delta_{k,A}\Delta_{k,B}\right|=O(\Delta^2).
\end{equation}
This concludes the proof.

\bibliographystyle{abbrv}
\bibliography{time}

\end{document}